\numberwithin{equation}{section}
\numberwithin{figure}{section}
\numberwithin{table}{section}
\newtheorem{theorem}{Theorem}[section]
\newtheorem{proposition}{Proposition}[section]
\newtheorem{example}{Example}[section]
\newtheorem{note}{Note}[section]
\newenvironment{proof}[1][Proof]{\medskip \noindent\textit{#1.} }{\ \rule{0.5em}{0.5em} \medskip}
\begin{document}
\bibliographystyle{cj}

\vspace*{0mm}

\begin{center}
{\Large \bf Measuring the lack of monotonicity in functions}

\vspace*{5mm}

{\large  Danang Teguh Qoyyimi$^{\,a,b}$ and Ri\v cardas Zitikis$^{\,b,*}$}

\bigskip

$^{\,a}$\textit{Department of Mathematics, Gadjah Mada University, Yogyakarta 55281, Indonesia}

\medskip

$^{\,b}$\textit{Department of Statistical and Actuarial Sciences,
University of Western Ontario, London, Ontario N6A 5B7, Canada}

\end{center}

\bigskip

\begin{abstract}

Problems in econometrics, insurance, reliability engineering, and statistics quite often rely on the assumption that certain functions are non-decreasing. To  satisfy this requirement, researchers frequently model the underlying phenomena using parametric and semi-parametric families of functions, thus effectively specifying the required shapes of the functions. To tackle these problems in a non-parametric way, in this paper we suggest indices for measuring the lack of monotonicity in functions. We investigate properties of the indices and also offer a convenient computational technique for practical use.

\bigskip

\noindent
\textit{JEL Classification:}
\begin{quote}
C02 - Mathematical Methods\\
C44 - Statistical Decision Theory; Operations Research \\
C51 - Model Construction and Estimation\\
D81 - Criteria for Decision-Making under Risk and Uncertainty
\end{quote}

\medskip

\noindent
\textit{Keywords and phrases}: monotonicity, monotone rearrangement, convex rearrangement, comonotonicity,  monotone likelihood ratio test, likelihood ratio ordering, hazard rate ordering, weighted insurance premiums.
\end{abstract}

\vfill

\noindent
$^{*}${\small Corresponding author:
tel: +1 519 432 7370;
fax: +1 519 661 3813;
e-mail: \texttt{zitikis@stats.uwo.ca}}

\section{Introduction}
\label{section-1}

In a number of problems such as developing statistical tests, assessing insurance and financial risks, dealing with demand and production functions in economics, modeling mortality and longevity of populations, researchers often face the need to know  whether certain functions are monotonic (e.g., non-decreasing) or not, and if not, then they wish to assess their degree of non-monotonicity. Due to this reason, in this paper we suggest and explore several indices for measuring the lack of non-decreasingness in functions.

While determining monotonicity can be a standard, though perhaps quite difficult,  exercise of checking the sign of the first derivative over the region of interest, assessing the lack of monotonicity in non-monotonic functions has gotten much less attention in the literature (e.g., Davydov and Zitikis, 2005). To illustrate problems where monotonicity, or lack of it, matters, we next present four examples.

\begin{example}\rm\label{exa-1}
Monotone likelihood ratio (MLR) families play important roles in areas of statistics such as constructing uniformly powerful hypothesis tests, confidence bounds and regions. In short, a family of absolutely continuous cumulative distribution functions (cdf's) $\{F_{\theta}: \theta \in \Theta \subseteq \mathbf{R}\}$ is MLR if for every $\theta_1<\theta_2$, the two cdf's $F_{\theta_1}$ and $F_{\theta_2}$ are distinct and the ratio $f_{\theta_1}(\mathbf{x})/f_{\theta_2}(\mathbf{x})$ of the corresponding densities is an increasing function of a statistic $T(\mathbf{x})\in \mathbf{R}$, where $\mathbf{x}=(x_1,\dots , x_n)$ is a generic $n$-dimensional observation. For more details on the MLR families and their uses in statistics, we refer to, e.g., Chapter 4 of Bickel and Doksum (2001).
\end{example}

\begin{example}\rm\label{exa-2}
The presence of a deductible $d\ge 0$ often changes the profile of insurance losses (e.g., Brazauskas et al., 2009). Because of this and other reasons, given two losses $X$ and $Y$, which may not be observable, decision makers wish to determine whether the observable losses $X_d=[X\mid X>d]$ and $Y_d=[Y\mid Y>d]$ are stochastically (ST) ordered, say $X_d \le_{\textrm{ST}} Y_d$ for every $d\ge 0$.  Denuit et al (2005) show on p.~124 that this ordering is equivalent to determining whether the ratio $S_Y(x)/S_X(x)$ is a non-decreasing function in $x$, where $S_X$ and $S_Y$ are the survival functions of $X$ and $Y$, respectively. We conclude this example by noting that this ordering is known in the literature (cf., e.g., Denuit et al, 2005) as the hazard rate (HR) ordering, and is succinctly denoted by $X \le_{\textrm{HR}} Y$.
\end{example}

\begin{example}\rm\label{exa-3}
More generally than in the previous example, one may wish to determine whether for every deductible $d\ge 0$ and every policy limit $L>d$, the observable insurance losses $X_{d,L}=[X\mid d\le X \le L ]$ and $Y_{d,L}=[Y\mid d\le Y \le L ]$ are stochastically ordered, say, $X_{d,L} \le_{\textrm{ST}} X_{d,L} $. We find on pages 127--128 in Denuit et al (2005) that this problem is equivalent to determining whether the ratio $f_Y(x)/f_X(x)$ is a non-decreasing function in $x$ over the union of the supports of $X$ and $Y$, where $f_X$ and $f_Y$ are the density functions of $X$ and $Y$, respectively. This ordering is known in the literature (cf., e.g., Denuit et al, 2005) as the likelihood ratio (LR) ordering and is succinctly denoted by $X \le_{\textrm{LR}} Y$. For further details on various stochastic orderings and their manifold  applications, we refer to Levy (2006), Shaked and Shanthikumar (2006), Li and Li (2013).
\end{example}

\begin{example}\rm\label{exa-4}
Let $\mathcal{X}_+$ denote the set of all non-negative random variables $X$ representing insurance losses. The premium calculation principle (pcp) is a functional $\pi : \mathcal{X}_{+} \to [0 , \infty ] $. Furman and Zitikis (2008a, 2009) have specialized this general premium to the weighted pcp $\pi_w$ defined by the equation $\pi_w[X]=\mathbf{E}[Xw(X)] / \mathbf{E}[w(X)]$, where $w: [0 , \infty ) \to [0 , \infty )$ is a weight function specified by the decision maker, or implied by certain axioms. The functional $\pi_w: \mathcal{X}_{+} \to [0 , \infty ]$ satisfies the non-negative loading property whenever the weight function $w$ is non-decreasing (cf.\, Lehmann, 1966). This is one of the very basic properties that insurance premiums need to satisfy. For further information on this topic, we refer to Sendov et al (2011). For a concise overview of pcp's, we refer to, e.g., Young (2004). For detailed results and their proofs, we refer to, e.g., Denuit et al  (2005).
\end{example}

We next briefly present a few more topics and related references where monotonicity, or lack of it, of certain functions plays an important role:
\begin{itemize}
  \item
  Growth curves (cf., e.g., Bebbington et al, 2009; Chernozhukov et al, 2009; Panik, 2014).
  \item
  Mortality curves  (cf., e.g., Gavrilov and Gavrilova, 1991; Bebbington et al, 2011).
  \item
  Positive regression dependence and risk sharing (cf., e.g., Lehmann,  1966; Barlow and Proschan, 1974; Bebbington et al, 2007; Dana and Scarsini, 2007).
  \item
  Portfolio construction, capital allocations, and comonotonicity (cf., e.g., Dhaene et al, 2002a, 2002b; Dhaene et al, 2006; Furman and Zitikis, 2008b).
  \item
  Decision theory and stochastic ordering (cf., e.g., Denuit et al, 2005; Levy, 2006; Shaked and Shanthikumar, 2006; Egozcue et al, 2013).
  \item
  Engineering reliability and risks (cf., e.g., Barlow and Proschan, 1974;  Lai and Xie, 2006; Singpurwalla, 2006; Bebbington et al, 2008; Li and Li, 2013).
\end{itemize}

One unifying feature of these diverse works is that they impose monotonicity requirements on certain functions, which are generally unknown, and thus researchers seek for statistical models and data for determining their shapes. To illustrate the point, we recall, for example, the work of Bebbington et al (2011) who specifically set out to determine whether mortality continues to increase or starts to decelerate after a certain species related late-life age. This is known in the literature as the late-life mortality deceleration phenomenon. Hence, we can rephrase the phenomenon as a question: is the mortality function always increasing? Naturally, we do not elaborate on this topic any further in this paper, referring the interested reader to Bebbington et al (2011), Bebbington et al (2014), and references therein.

To verify the monotonicity of functions such as those noted in the above examples, researchers quite often assume that the functions belong to some parametric or semiparametric families. One may not, however, be comfortable with this element of subjectivity and thus prefers to rely solely on data to make a judgement. Under these circumstances, verifying monotonicity becomes a non-parametric problem, whose solution asks for an index that, for example, takes on the value $0$ when the function under consideration is non-decreasing and on positive values otherwise. In the following sections we shall introduce and discuss two such indices; both of them are useful, but due to different reasons.

\section{An index of non-decreasingness and its properties}
\label{index-i}

Perhaps the most obvious definition of an index of non-decreasingness is based on the notion of non-decreasing rearrangement, which, for a function $h:[0,1]\to \mathbf{R}$, is defined by
\[
I_{h}(t)=\inf \{x\in \mathbf{R}: G_{h}(x) \ge t\}  \quad \textrm{for all} \quad t \in [0,1] ,
\]
where
\[
G_{h}(x)= \lambda \{ s \in [0,1]: h(s) \leq x \}  \quad \textrm{for all} \quad
x\in \mathbf{R},
\]
with $\lambda$ denoting the Lebesque measure. Hence, any distance between the original function $h$ and its non-decreasing rearrangement $I_{h}$ can serve an index of non-decreasingness. Of course, there are many distances in function spaces, and thus many indices, but we shall concentrate here on the $L_1$-distance due to its attractive geometric interpretation and other properties. Thorough the paper, we assume that $h$ is integrable on its domain of definition.

\begin{note}\label{note21}\rm
The function $I_{h}$ is known in the literature as the generalized inverse of the function $G_{h}$, and is thus frequently denoted by $G_{h}^{-1}$. Throughout this paper, however, we prefer using the notation $I_{h}$ to emphasize the fact that this is a weakly increasing, that is, non-decreasing function. In probability and statistics, researchers would call $I_{h}$ the quantile function of the `random variable' $h$. In the literature on function theory and functional analysis (cf., e.g. Hardy et al, 1952; Denneberg, 1994; Korenovskii, 2007; and references therein) the function $I_h$ is usually called the (non-decreasing) equimeasurable rearrangement of $h$.
\end{note}

We are now in the position to give a rigorous definition of the earlier noted $L_1$-based index of non-decreasingness, which is
\[
\mathcal{I}_{h} = \int_0^1 \big |h(t)-I_{h}(t) \big |dt .
\]
The index $\mathcal{I}_{h} $ takes on the value $0$ if and only if the function $h$ is non-decreasing. The proof of this fact is based on the well-known property (cf., e.g., Proposition \ref{equiv-0} in Appendix \ref{appendix} below) that $h$ is non-decreasing if and only if the equation $I_{h}(t)=h(t)$ holds for $\lambda $-almost all $t \in [0,1]$.

It is instructive to mention here that the notion of monotone rearrangement has been very successfully used in quite a number of areas:
\begin{itemize}
\item
Efficient insurance contracts (e.g., Carlier and Dana, 2005;  Dana and Scarsini, 2007).
\item
Rank-dependent utility theory  (Quiggin, 1982, 1993; also Carlier and Dana, 2003, 2008, 2011).
\item
Continuous-time portfolio selection (e.g., He and Zhou, 2011; Jin and Zhou, 2008).
\item
Statistical applications such as performance improvement of estimators (e.g., Chernozhukov et al, 2009, 2010) and optimization problems (e.g., R\"{u}schendorf, 1983).
\item
Stochastic processes and probability theory  (e.g., Egorov, 1990; Zhukova, 1994, 1998; Thilly, 1999).
\end{itemize}
These are just a few illustrative topics and references, but they lead us into the vast literature on monotone rearrangements and their manifold uses.

The following probabilistic interpretation of the basic quantities involved in our research will play a pivotal role, especially when devising simple proofs of a number of results. We note at the outset that the interpretation is well known and appears frequently in the literature (cf., e.g., Denneberg, 1994; Carlier and Dana, 2005)

\begin{note}[Probabilistic interpretation]\label{note22}\rm
The interval $[0,1]$ can be viewed as a sample space, usually denoted by $\Omega $ in probability and statistics. Furthermore, the Lebesgue measure $\lambda $ can be viewed as a probability measure, usually denoted by $\mathbf{P}$, which is defined on the $\sigma $-algebra of all Borel subsets of $\Omega = [0,1]$. Hence, the function $h:[0,1]\to \mathbf{R}$ can be viewed as a random variable, usually denoted by $X:\Omega \to \mathbf{R}$ in probability and statistics. Under these notational agreements, the function $G_{h}$ can be viewed as the cdf $F_X$ of $X$, and, in turn, the function $I_{h}$ can be viewed as the quantile function $F_X^{-1}$ of $X$.
\end{note}

To illustrate how this probabilistic point of view works, we recall the well-known equation
\begin{equation}
\label{lemma.sameint}
\int_0^1 I_{h}(t)dt=\int_0^1 h(t)dt,
\end{equation}
which we shall later use in proofs. The validity of equation (\ref{lemma.sameint}) can easily be established as follows. We start with the equation $\int_0^1 I_{h}(t)dt=\int_0^1 F_X^{-1}(t)dt$. Then we recall that the mean $\mathbf{E}[X]$ of $X$ can be written as $\int_0^1 F_X^{-1}(t)dt$. Hence,  $\int_0^1 I_{h}(t)dt=\mathbf{E}[X]$. Furthermore, appealing to the probabilistic interpretation one more time, we have $\mathbf{E}[X]=\int_0^1 h(t)dt $, which establishes equation (\ref{lemma.sameint}). Of course, from the purely mathematical point of view, equation (\ref{lemma.sameint}) follows from the fact that $h$ and $I_{h}$ are equimeasurable functions and thus their integrals coincide. In summary, we have demonstrated that equation (\ref{lemma.sameint}) holds.

We conclude this section with a few additional properties of the index $\mathcal{I}_{h}$ which will lead us naturally to the next section. First, as one would intuitively expect, any index of non-decreasingness should not change if the function  $h: [0,1] \to \mathbf{R}$ is lifted up or down by any constant $d\in \mathbf{R}$. This is indeed the case, as the equation
\begin{equation}\label{prop-1u}
\mathcal{I}_{h+d}=\mathcal{I}_{h}
\end{equation}
follows easily upon checking that, for every constant $d\in \mathbf{R}$, the equation  $I_{h+d}(t)=I_{h}(t)+d$ holds for every $ t\in [0,1]$. Finally, the multiplication of the function $h$ by any non-negative constant $c\ge 0$ (so as not to change the direction of monotonicity) should only change the index by as much as it changes the slope of the function. Indeed, we have the equation
\begin{equation}\label{prop-2}
\mathcal{I}_{ch}=c\mathcal{I}_{h}
\end{equation}
that follows easily upon checking that, for every constant $c\ge 0$, the equation $I_{ch}(t)=cI_{h}(t) $ holds for every $ t\in [0,1]$.

\section{Comonotonically additive index of non-decreasingness}
\label{section-3}

It is instructive to view equation (\ref{prop-1u}) as the additivity property
\begin{equation}\label{prop-1uu}
\mathcal{I}_{h+g_0}=\mathcal{I}_{h}+\mathcal{I}_{g_0},
\end{equation}
where $g_0$ is the constant function defined by $g_0(t)=d$ for all $t\in [0,1]$, with $d\in \mathbf{R}$ being a constant. Indeed, $\mathcal{I}_{g_0}=0$, and thus we conclude that equations (\ref{prop-1u}) and (\ref{prop-1uu}) are equivalent.

Note that the functions $h$ and $g_0$ are commonotonic irrespectively of the value of $d$. This fact follows immediately from the definition of comonotonicity (cf. Schmeidler, 1986): \textit{Two functions $h$ and $g$ are comonotonic if and only if there are no $t_1$ and $t_2$ such that  $h(t_1)<h(t_2)$ and $g(t_1)>g(t_2)$.} This is a well-known notion, extensively utilized in many areas, perhaps most notably in economics and insurance. For further details and references on the topic, we refer to Denneberg (1994), Dhaene et al (2002a,b), Dhaene et al (2006), and references therein.

Coming now back to equation (\ref{prop-1uu}), a natural question is whether the equation still holds if the constant function $g_0$ is replaced by any other function $g$ that is comonotonic with $h$. For this, we first recall the fact (cf. Corollary 4.6 in Denneberg, 1994) that, for every pair of comonotonic functions $h$ and $g$,
\begin{equation}\label{prop-1v2}
I_{h+g}(t)=I_{h}(t)+I_{g}(t) \quad \textrm{for every} \quad t\in [0,1].
\end{equation}
Unfortunately, the index $\mathcal{I}_{h}$ is based on the non-linear functional $\Delta \mapsto \int_0^1 |\Delta (t)|dt$, and we can thus at most have the subaddivity property:
\begin{equation}\label{prop-1uu2}
\mathcal{I}_{h+g}\le \mathcal{I}_{h}+\mathcal{I}_{g} .
\end{equation}
(The lack of additivity would, of course, still be the case even if we replaced the $L_1$-type functional by any other $L_p$-type functional.) Hence, we need a linear functional.

Note that by simply dropping the absolute values from the functional $\Delta \mapsto \int_0^1 |\Delta (t)|dt$ would not lead us to the desired outcome because the new `index' would be identically equal to $0$ as seen from equation (\ref{lemma.sameint}). Remarkably, there is an easy way to linearize the functional $\Delta \mapsto \int_0^1 |\Delta (t)|dt$. This is achieved by dropping the absolute values and, very importantly, weighting $dt$ with the function $1-t$. These two steps lead us to the functional $\Delta \mapsto \int_0^1 \Delta (t)(1-t)dt$ and thus, in turn, to the quantity
\begin{equation}\label{equiv-2}
\mathcal{L}_{h} = \int_0^1 \left(h(t)-I_{h}(t)\right)(1-t)dt,
\end{equation}
but before declaring it an index of non-decreasingness, we need to verify that  $\mathcal{L}_{h}$ is always non-negative and takes on the value $0$ if and only if the function $h$ is non-decreasing. These are non-trivial tasks, whose solutions make up our next Theorem \ref{theorem.j}. Before formulating the theorem, we next present an illustrative example where $\mathcal{I}_{h}$ and $\mathcal{L}_{h}$ are calculated and compared.

\begin{example}\label{ex-1}\rm
For a fixed parameter $\alpha \in [0,1]$, let $h_{\alpha }$ be the function on $[0,1]$ defined by
\[
h_{\alpha}(t) =
\left
\lbrace
\begin{array}{cl}
t, & \text{for}\quad t \in \left[0,1/2\right],
\\
\alpha t + (1-\alpha)(1-t), & \text{for}\quad t \in \left(1/2,1\right].
\end{array}
\right.
\]
Note that $h_{\alpha }$ is non-decreasing when $\alpha \in [1/2,1]$, and thus $\mathcal{I}_{h_{\alpha }}=0$ and $\mathcal{L}_{h_{\alpha }}=0$. When $\alpha \in [0, 1/2)$, then a somewhat tedious calculation (relegated to Appendix \ref{appendix}) gives us the formulas
\[
\mathcal{I}_{h_{\alpha }} =
\frac{(1-2\alpha)(1-\alpha)}{2(3-2\alpha)}
\]
and
\[
\mathcal{L}_{h_{\alpha }} = \frac{(1-2\alpha)(1-\alpha)}{24}.
\]
These indices as functions of $\alpha $ are depicted in Figure \ref{fig:ihjh},
\begin{figure}[h!]
\centering
\includegraphics[width=2.5in]{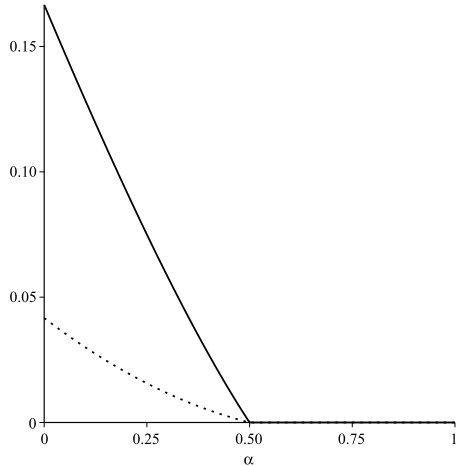}
\caption{The indices $\mathcal{I}_{h}$ (solid) and $\mathcal{L}_{h}$ (dotted) as functions of $\alpha$.}
\label{fig:ihjh}
\end{figure}
which concludes Example \ref{ex-1}.
\end{example}

\begin{theorem}\label{theorem.j}
For every function $h: [0,1] \to \mathbf{R}$, the index $\mathcal{L}_{h}$ is non-negative and takes on the value $0$ if and only if the function $h$ is non-decreasing.
\end{theorem}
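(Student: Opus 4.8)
The plan is to convert the weighted integral defining $\mathcal{L}_{h}$ into an unweighted integral of a \emph{partial-integral} function and then to analyse the sign of that function. Write $\Delta(t)=h(t)-I_{h}(t)$ and set $D(t)=\int_0^t\Delta(s)\,ds$. Since $h$ is integrable, so is $I_{h}$ (it is equimeasurable with $h$), hence $D$ is absolutely continuous with $D'=\Delta$ a.e.; moreover $D(0)=0$ trivially and $D(1)=0$ by the equimeasurability identity~(\ref{lemma.sameint}). Integration by parts then gives
\[
\mathcal{L}_{h}=\int_0^1\Delta(t)(1-t)\,dt=\left[D(t)(1-t)\right]_0^1+\int_0^1 D(t)\,dt=\int_0^1 D(t)\,dt .
\]
(Equivalently, one first uses~(\ref{lemma.sameint}) to rewrite $\mathcal{L}_{h}=\int_0^1(I_{h}(t)-h(t))\,t\,dt$, writes $t=\int_0^1\mathbf{1}\{u\le t\}\,du$, and applies Fubini's theorem.) So it suffices to control the sign of $D$.

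The core step --- and the one I expect to be the main obstacle --- is the pointwise inequality $D(t)\ge 0$ for all $t\in[0,1]$, i.e.\
\[
\int_0^t I_{h}(s)\,ds\ \le\ \int_0^t h(s)\,ds .
\]
This is the extremal property of the non-decreasing rearrangement (a Hardy--Littlewood type inequality): among all measurable subsets of $[0,1]$ of measure $t$, none carries less of the ``mass'' of $h$ than the amount $\int_0^t I_{h}$. I would prove it from scratch as follows. Put $c=I_{h}(t)$; monotonicity of $I_{h}$ gives $I_{h}(s)\le c$ for $s\le t$ and $I_{h}(s)\ge c$ for $s\ge t$, so that
\[
\int_0^t I_{h}(s)\,ds=\int_0^1\min\{I_{h}(s),c\}\,ds-c(1-t),
\]
and equimeasurability of $h$ and $I_{h}$ (applied to the bounded monotone function $y\mapsto\min\{y,c\}$) yields $\int_0^1\min\{I_{h}(s),c\}\,ds=\int_0^1\min\{h(s),c\}\,ds$. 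On the other hand,
\begin{align*}
\int_0^t h(s)\,ds&\ \ge\ \int_0^t\min\{h(s),c\}\,ds=\int_0^1\min\{h(s),c\}\,ds-\int_t^1\min\{h(s),c\}\,ds\\
&\ \ge\ \int_0^1\min\{h(s),c\}\,ds-c(1-t),
\end{align*}
and combining the last three displays gives $D(t)\ge 0$. (The only delicate bookkeeping concerns the precise value of $c=I_{h}(t)$ and the two ``$\le c$'' / ``$\ge c$'' inequalities when the distribution of $h$ has atoms; alternatively one may simply invoke the classical rearrangement inequality, see Hardy et al.~(1952).)

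Granting $D\ge 0$, the first claim is immediate: $\mathcal{L}_{h}=\int_0^1 D(t)\,dt\ge 0$. For the equality case, $D$ is continuous (being an indefinite integral of an integrable function) and non-negative, so $\mathcal{L}_{h}=0$ forces $D\equiv 0$ on $[0,1]$; that is, $\int_0^t h(s)\,ds=\int_0^t I_{h}(s)\,ds$ for every $t$, and differentiating (Lebesgue differentiation theorem) yields $h(t)=I_{h}(t)$ for $\lambda$-almost every $t$, which by Proposition~\ref{equiv-0} means that $h$ is non-decreasing. Conversely, if $h$ is non-decreasing then Proposition~\ref{equiv-0} gives $I_{h}=h$ $\lambda$-a.e., so the integrand in~(\ref{equiv-2}) vanishes a.e.\ and $\mathcal{L}_{h}=0$. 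This would complete the proof.
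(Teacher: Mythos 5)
Your proof is correct, and its skeleton coincides with the paper's: both reduce $\mathcal{L}_{h}$ to $\int_0^1 D(t)\,dt$ with $D=H_{h}-C_{H_{h}}$ (you via integration by parts, the paper via Fubini's theorem, equation (\ref{j-equation})), both rest on the pointwise bound $D\ge 0$ (the paper's bound (\ref{step-two})), and both then settle the equality case. Where you genuinely diverge is in the proof of the central inequality $\int_0^t I_{h}(s)\,ds\le\int_0^t h(s)\,ds$, which the paper itself calls the longest and most complex part. The paper splits $h$ into its negative and positive parts $\min\{h,0\}$ and $\max\{h,0\}$, shows the rearrangement splits accordingly, and handles each piece with a probabilistic argument: it compares $X_{-}$ with $YX_{-}$ and $X_{+}$ with $ZX_{+}$ for indicator random variables $Y,Z$, and reads the bound off the location and size of the atom at $0$ in the resulting cdf's. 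You instead truncate at the level $c=I_{h}(t)$ and use only the monotonicity of $I_{h}$ together with the equimeasurability of $\min\{h,c\}$ and $\min\{I_{h},c\}$ --- the classical Hardy--Littlewood device. This is shorter, avoids the positive/negative decomposition entirely, and (contrary to your own worry) needs no special care for atoms: the identities $\min\{I_{h}(s),c\}=I_{h}(s)$ for $s\le t$ and $\min\{I_{h}(s),c\}=c$ for $s\ge t$ follow from monotonicity of $I_{h}$ alone, and $c=I_{h}(t)$ is finite for $t\in(0,1)$ because $h$ is integrable. Your treatment of the equality case is also slightly more direct: from $D\equiv 0$ you differentiate to obtain $h=I_{h}$ $\lambda$-a.e.\ and invoke Proposition \ref{equiv-0}, whereas the paper passes through the convexity of $H_{h}$. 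Both versions share the same harmless imprecision in the ``only if'' direction: since $\mathcal{L}_{h}$ is unchanged by altering $h$ on a null set, one can strictly only conclude that $h$ agrees $\lambda$-almost everywhere with a non-decreasing function.
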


\begin{proof}
The proof is somewhat complex, and we have thus subdivided it into three parts:
First, we establish an alternative representation (equation (\ref{j-equation}) below) for $\mathcal{L}_{h}$ on which the rest of the proof relies, and which, incidentally, clarifies how we came up with the weight $1-t$ in definition (\ref{equiv-2}). Then, in the second part, which is the longest and most complex part of the proof of Theorem \ref{theorem.j}, we establish a certain ordering result (bound (\ref{step-two}) below) that implies the non-negativity of $\mathcal{L}_{h}$. Finally, in the third part we prove that $\mathcal{L}_{h}=0$ if and only if the function $h$ is non-decreasing.

\paragraph{\it Part 1:}

Here we express $\mathcal{L}_{h}$ by an alternative formula that plays a pivotal role in our subsequent considerations. For this, we first recall that, by  definition, the indicator  $\textbf{1}\{ S \}$ of statement $S$ takes on the value $1$ if the statement $S$ is true and on the value $0$ otherwise. With this notation, and also using Fubini's theorem, we have the equations:
\begin{align}
\mathcal{L}_{h}
&= \int_0^1 (h(t)-I_h(t))\int_t^1 dsdt \notag \\
&= \int_0^1 \int_0^1 (h(t)-I_h(t)) \textbf{1}\{s \geq t\} dsdt \notag \\
&= \int_0^1 \left(\int_0^s h(t)dt - \int_0^s I_h(t)dt \right)ds \notag \\
&= \int_0^1 \left(H_h(s)-C_H(s)\right)ds,
\label{j-equation}
\end{align}
where $H_{h}:[0,1]\to \mathbf{R} $ is defined by
$H_h(s)= \int_0^s h(t)dt$, and $C_{H_{h}}:[0,1]\to \mathbf{R} $ is the convex rearrangement of $H_{h}$ defined by $C_H(s)=\int_0^s I_h(t)dt$, where $I_{h}$ is the non-decreasing rearrangement of $h$. The right-hand side of equation (\ref{j-equation}) is the desired alternative expression of $\mathcal{L}_{h}$.

\paragraph{\it Part 2:}

In view of expression (\ref{j-equation}), the non-negativity of $\mathcal{L}_{h}$ follows from the bound
\begin{equation}\label{step-two}
H_{h}(t) \geq C_{H_{h}}(t) \quad \textrm{for all} \quad t \in [0,1].
\end{equation}
To prove bound (\ref{step-two}), we first note that every real number $y\in \mathbf{R}$ can be decomposed as the sum $w_1(y)+w_2(y)$, where $w_1(y)=\min \{ y,0 \}$ and $w_2(y)=\max \{ y,0 \}$.
Hence,
\begin{equation}
I_{h}(s)= F_X^{-1}(s)=w_1\big ( F_X^{-1}(s) \big )+w_2\big ( F_X^{-1}(s) \big ).
\label{qq-1}
\end{equation}
Now we recall (cf., e.g., Denuit et al, 2005, Property 1.5.16(i), p.~19) that for every non-decreasing and continuous function $w$, we have the equation $w( F_X^{-1}(s) )=F_{w(X)}^{-1}(s)$. Since $w_1$ and $w_2$ are non-decreasing and continuous, equation (\ref{qq-1}) implies
\[
I_{h}(s)= F_{w_1(X)}^{-1}(s) +F_{w_2(X)}^{-1}(s) \big )
= I_{h_{-}}(s)+ I_{h_{+}}(s),
\]
where $h_{-}(s)=w_1(h(s))$ and  $h_{+}(s)=w_2(h(s))$. Hence,
\begin{align*}
\int_0^t I_{h}(s)ds
&= \int_0^t  I_{h_{-}}(s)ds+ \int_0^t I_{h_{+}}(s)ds
\notag
\\
&\le  \int_0^t  h_{-}(s)ds+ \int_0^t h_{+}(s)ds = \int_0^t h(s)ds,
\end{align*}
provided that
\begin{equation}
\int_0^t  I_{h_{-}}(s)ds \le  \int_0^t  h_{-}(s)ds
\label{qq-4a}
\end{equation}
and
\begin{equation}
\int_0^t  I_{h_{+}}(s)ds \le  \int_0^t  h_{+}(s)ds.
\label{qq-4b}
\end{equation}
We shall prove bounds (\ref{qq-4a}) and (\ref{qq-4b}) next.

\paragraph{\it Proof of bound (\ref{qq-4a}).}

Let $X_{-}=\min \{X,0\} $. We have the equation $\int_0^t  I_{h_{-}}(s)ds =\int_0^t F_{X_{-}}^{-1}(s)ds$ and thus the bound
\begin{equation}
\int_0^t  I_{h_{-}}(s)ds
\le \int_0^t F_{YX_{-}}^{-1}(s)ds,
\label{qq-4a1}
\end{equation}
where $Y$ is the random variable defined by $Y(\omega)=\mathbf{1}\{\omega \le t\}$. To establish bound (\ref{qq-4a1}), we have used the inequality $X_{-} \le YX_{-}$, which holds because $X_{-} $ is non-positive.

Next we observe that the cdf $F_{YX_{-}}(x)$ takes on the value $1$ at the point $x=0$ and has a jump of a size at least as high as $1-t$ at the point $x=0$. Hence, the quantile function $F_{YX_{-}}^{-1}(s)$ is equal to $0$ for at least all $s\in (t,1)$, and so we have the equations:
\begin{equation}
\int_0^t F_{YX_{-}}^{-1}(s)ds= \int_0^1 F_{YX_{-}}^{-1}(s)ds
= \mathbf{E}[YX_{-}]=\int_0^t  h_{-}(s)ds.
\label{qq-4a2}
\end{equation}
Bound (\ref{qq-4a1}) and equations (\ref{qq-4a2}) complete the proof of bound (\ref{qq-4a}).

\paragraph{\it Proof of bound (\ref{qq-4b}).}

Let $X_{+}=\max \{X,0\} $. In our following considerations we shall need to estimate $X_{+}$ from below by $ZX_{+}$, where $Z$ is the random variable defined by $Z(\omega)=\mathbf{1}\{\omega > t\}$. For this reason, we now observe that bound (\ref{qq-4b}) is equivalent to the following one:
\begin{equation}
\int_t^1  I_{h_{+}}(s)ds \ge  \int_t^1  h_{+}(s)ds.
\label{qq-4b1}
\end{equation}
(The equivalence of the two bounds follows from the equation $\int_0^1  I_{h_{+}}(s)ds =  \int_0^1  h_{+}(s)ds$, which is a consequence of equation (\ref{lemma.sameint}).) To establish bound (\ref{qq-4b1}), we start with the equation $\int_t^1  I_{h_{+}}(s)ds =\int_t^1 F_{X_{+}}^{-1}(s)ds$ and arrive at the bound
\begin{equation}
\int_t^1  I_{h_{+}}(s)ds
\ge \int_t^1 F_{ZX_{+}}^{-1}(s)ds.
\label{qq-4b2}
\end{equation}
The cdf $F_{ZX_{+}}(x)$ is equal to $0$ for all $x<0$ and has a jump of a size at least as high as $t$ at the point $x=0$. Hence, the quantile function $F_{ZX_{+}}^{-1}(s)$ is equal to $0$ for at least all $s\in (0,t)$, and so we have the equations:
\begin{equation}
\int_t^1 F_{ZX_{+}}^{-1}(s)ds= \int_0^1 F_{ZX_{+}}^{-1}(s)ds
= \mathbf{E}[ZX_{+}]=\int_t^1  h_{+}(s)ds.
\label{qq-4b3}
\end{equation}
Bound (\ref{qq-4b2}) and equations (\ref{qq-4b3}) complete the proof of bound (\ref{qq-4b1}) and thus, in turn, establish bound (\ref{qq-4b}) as well.

Having thus proved bounds (\ref{qq-4a}) and (\ref{qq-4b}), we have established bound (\ref{step-two}). As we have noted earlier, this implies that $\mathcal{L}_{h}$ is non-negative.

\paragraph{\it Part 3:}

In this final part of the proof of Theorem \ref{theorem.j}, we establish the fact that $\mathcal{L}_{h}$ takes on the value $0$ if and only if the function $h$ is non-decreasing. This we do in two parts.

First, we assume that $h$ is non-decreasing. Then the function $H_{h}$ is convex. Furthermore, the convex rearrangement $C_{H_{h}}$ of the function $H_{h}$ leaves the function $H_{h}$ unchanged because $H_{h}$ is convex. In summary, when $h$ is non-decreasing, then the integral $\int_0^1 \left(H_{h}(t)-C_{H_{h}}(t)\right)dt$ and thus the index $\mathcal{L}_{h}$ are equal to $0$.

Moving now in the opposite direction, if the integral $\int_0^1 \left(H_{h}(t)-C_{H_{h}}(t)\right)dt$ is equal to $0$, then due to the already proved bound $H_{h}\geq C_{H_{h}}$, we have $H_{h}(t) = C_{H_{h}}(t)$ for $\lambda $-almost all $t \in [0,1]$. Consequently, the function $H_{h}$ must be convex, and thus the function $h$ must be non-decreasing.
This concludes the proof of Step 3, and thus of the entire Theorem \ref{theorem.j}.
\end{proof}

As we have seen in the proof of Theorem \ref{theorem.j}, the definition of the index $\mathcal{L}_{h}$ fundamentally relies on the notion of convex rearrangement, which also prominently features in several other research areas, such as:
\begin{itemize}
  \item Stochastic processes (cf., e.g., Zhukova, 1994; Davydov, 1998; Thilly, 1999; Davydov and Thilly, 2002; Davydov and Zitikis, 2004; Davydov and Thilly, 2007).
  \item Convex analysis (cf., e.g., Davydov and Vershik, 1998) with applications in areas such as the optimal transport problem (cf., e.g., Lachi\'{e}ze-Rey and Davydov, 2011).
  \item Econometrics (cf., e.g., Lorenz, 1905; Gastwirth, 1971; Giorgi, 2005).
  \item Insurance (cf., e.g., Brazauskas et al, 2008; Greselin et al, 2009; Necir et al, 2010).
\end{itemize}

We conclude this section with a few properties of $\mathcal{L}_{h}$. First, when $h$ and $g$ are comonotonic, then
\begin{equation}\label{prop-1r}
\mathcal{L}_{h+g}= \mathcal{L}_{h}+\mathcal{L}_{g},
\end{equation}
which follows from equation (\ref{prop-1v2}) and the linearity of the functional $\Delta \mapsto \int_0^1 \Delta (t)(1-t)dt$. In particular, we have $\mathcal{L}_{h+d}=\mathcal{L}_{h}$ for every function $h$ and every constant $d\in \mathbf{R}$, because $\mathcal{L}_{d}=0$. Next, for every non-negative constant $c\ge 0$, we have the equation
\begin{equation}\label{prop-2aa}
 \mathcal{L}_{ch}=c\mathcal{L}_{h},
\end{equation}
which follows immediately from $I_{ch}(t)=cI_{h}(t) $ and the definition of $\mathcal{L}_{h}$. Furthermore, from the definitions of $\mathcal{I}_{h}$ and $\mathcal{L}_{h}$ we immediately obtain the bound
\begin{equation}\label{bound-0j1}
\mathcal{L}_{h}\le \mathcal{I}_{h},
\end{equation}
which, incidentally, explains the ordering of the two curves in Figure \ref{fig:ihjh}.

\section{Computing the indices}
\label{section-computing}

Except for very simple functions such as $h_{\alpha }$ of Example \ref{ex-1}, calculating the indices $\mathcal{I}_{h}$ and $\mathcal{L}_{h}$ is usually a tedious and time consuming task. To facilitate a practical implementation irrespectively of the function $h$, we next develop a technique that gives numerical values of the two indices at any prescribed precision and in virtually no time.

\subsection{General considerations}

We start with a general observation: Given two integrable functions $h,g: [0,1] \to \mathbf{R}$, we have the bound
\begin{equation}
\int_0^1 \left|I_{h}(t)-I_g(t)\right| dt
\leq \int_0^1 \left|h(t)-g(t)\right| dt,
\label{eq.2.1}
\end{equation}
which is well known (e.g., Lorentz, 1953) and has been utilized by many researchers (cf., e.g., Egorov, 1990; Zhukova, 1994; Thilly, 1999; Chernozhukov et al, 2009). In Appendix \ref{appendix} we shall give a very simple proof of bound (\ref{eq.2.1}) which will further illuminate the usefulness of the probabilistic interpretation. Due to bound (\ref{eq.2.1}), we obviously have
\begin{equation}\label{bound-0i}
\left|\mathcal{I}_{h}-\mathcal{I}_{g}\right| \leq 2\int_0^1|h(t)-g(t)|dt,
\end{equation}
Likewise, we obtain the bound
\begin{equation}\label{bound-0j}
\left|\mathcal{L}_{h}-\mathcal{L}_{g}\right| \leq 2\int_0^1\left|h(t)-g(t)\right|dt,
\end{equation}
which holds for every pair of integrable functions $g,h: [0,1] \to \mathbf{R}$. Just like bound (\ref{bound-0i}), bound (\ref{bound-0j}) helps us to develope a discretization technique for calculating the index $\mathcal{L}_{h}$ numerically. More details on the technique follow next.

Namely, we shall replace $g$ by a specially constructed estimator $\widehat{h}$ of $h$ such that the $L_1$-distance $\int_0^1 |h(t)-\widehat{h}(t)| dt$ can be made as small as desired by choosing a sufficiently small `tuning' parameter $n$. To this end, we proceed as follows. First, we partition the interval $[0,1)$ into $n$ subintervals $[(i-1)/n,i/n)$ and then choose any point $t_i $ in each subinterval. Denote $\tau_i=h(t_i)$ and let
\[
\widehat{h}(t)=
\begin{cases}
\tau_i, & \text{when}\quad  t \in \left[(i-1)/n, i/n\right),
\\
\tau_n, &\text{when}\quad t = 1 .
\end{cases}
\]
With $\tau_{1:n}\le \cdots \le \tau_{n:n}$ denoting the ordered values $\tau_1, \dots, \tau_n$, the function $G_{\widehat{h}}(x)=\lambda \{ t \in [0,1]: \widehat{h}(t) \leq x \}$ can be written as
\[
G_{\widehat{h}}(x)=
\begin{cases}
0 & \text{for  } x < \tau_{1:n},
\\
i/n & \text{for  } x \in \left[\tau_{i:n},\tau_{i+1:n}\right) , \quad 1\le i \le n-1,
\\
1 & \text{for  } x \geq \tau_{n:n} .
\end{cases}
\]
Hence, the non-decreasing rearrangement $I_{\widehat{h}}(t)= \inf \{ x \in \mathbf{R}: G_{\widehat{h}}(x) \geq t \}$ can be expressed in a computationally convenient way as
\[
I_{\widehat{h}}(t)=\tau_{i:n} \quad \textrm{for every} \quad t\in ((i-1)/n,i/n],
\]
which holds for every $i=1,\dots , n$. This implies
\begin{equation}
\mathcal{I}_{\widehat{h}}=\int_0^1 |\widehat{h}(t)-I_{\widehat{h}}(t)|dt = \frac{1}{n} \sum_{i=1}^n \left|\tau_{i:n}-\tau_i\right|.
\label{ihdiscrete}
\end{equation}
Likewise, to calculate $\mathcal{L}_{\widehat{h}}$, we use formula (\ref{equiv-2}) with $\widehat{h}$ instead of $h$, and then employ the above expressions for $\widehat{h}$ and $I_{\widehat{h}}$. We obtain
\begin{equation} \label{jhdiscrete}
\mathcal{L}_{\widehat{h}}=\int_0^1 \big ( \widehat{h}(t)-I_{\widehat{h}}(t)\big ) (1-t)dt = \frac{1}{n^2} \sum_{i=1}^{n} i \left(\tau_{i:n}-\tau_i\right).
\end{equation}
From bounds (\ref{bound-0i}) and (\ref{bound-0j}), we conclude that $|\mathcal{I}_{\widehat{h}} -\mathcal{I}_{h}|$  and $|\mathcal{L}_{\widehat{h}} -\mathcal{L}_{h}|$ do not exceed $2\int_0^1 |\widehat{h}(t)-h(t) |dt$, which converges to $0$ when $n\to \infty $ irrespectively of the chosen $t_i$'s because the function $h$ is integrable on $[0,1]$. Hence, instead of calculating the usually unwieldy  $\mathcal{I}_{h}$ and $\mathcal{L}_{h}$, we can employ formulas (\ref{ihdiscrete}) and (\ref{jhdiscrete}) and easily calculate $\mathcal{I}_{\widehat{h}} $ and $\mathcal{L}_{\widehat{h}}$ instead. Choosing a sufficiently large $n$, we can reach any desired level of accuracy. An illustration of this procedure follows next.

\subsection{An illustration with insights into the indices}

Here we calculate and interpret the indices in the case of the functions $h_1(t)=\sin(tM)$ and $h_2(t)=\cos(tM)$ defined on the interval $[0,1]$, for several values of $M$. The functions are of course simple, but we have nevertheless visualized them in Figure \ref{fig.viah}
\begin{figure}[h!]
\centering
\subfigure[$M=\pi/2$]{\includegraphics[height=2in, width=2.7in]{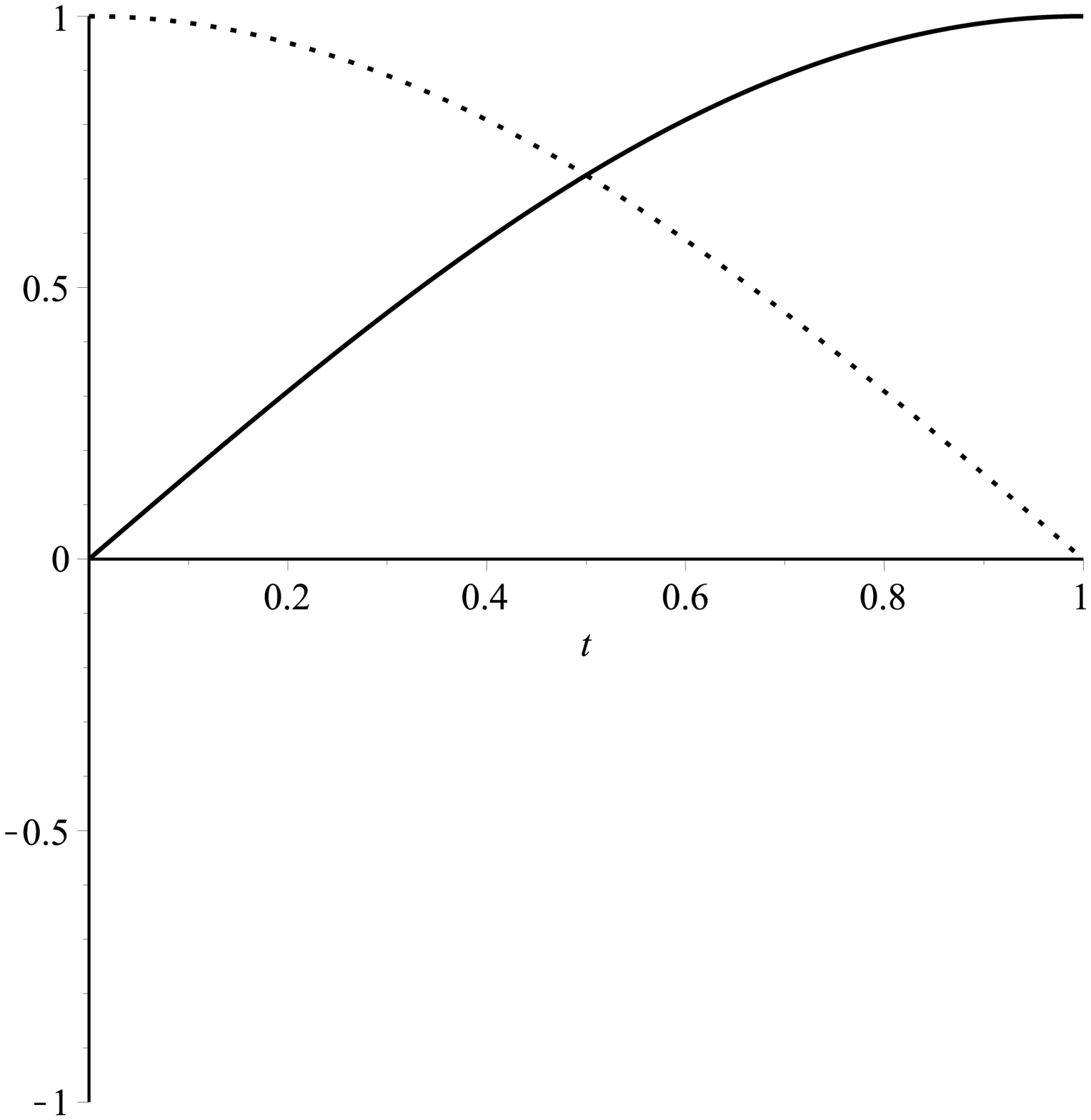}}
\qquad
\subfigure[$M=\pi$]{\includegraphics[height=2in, width=2.7in]{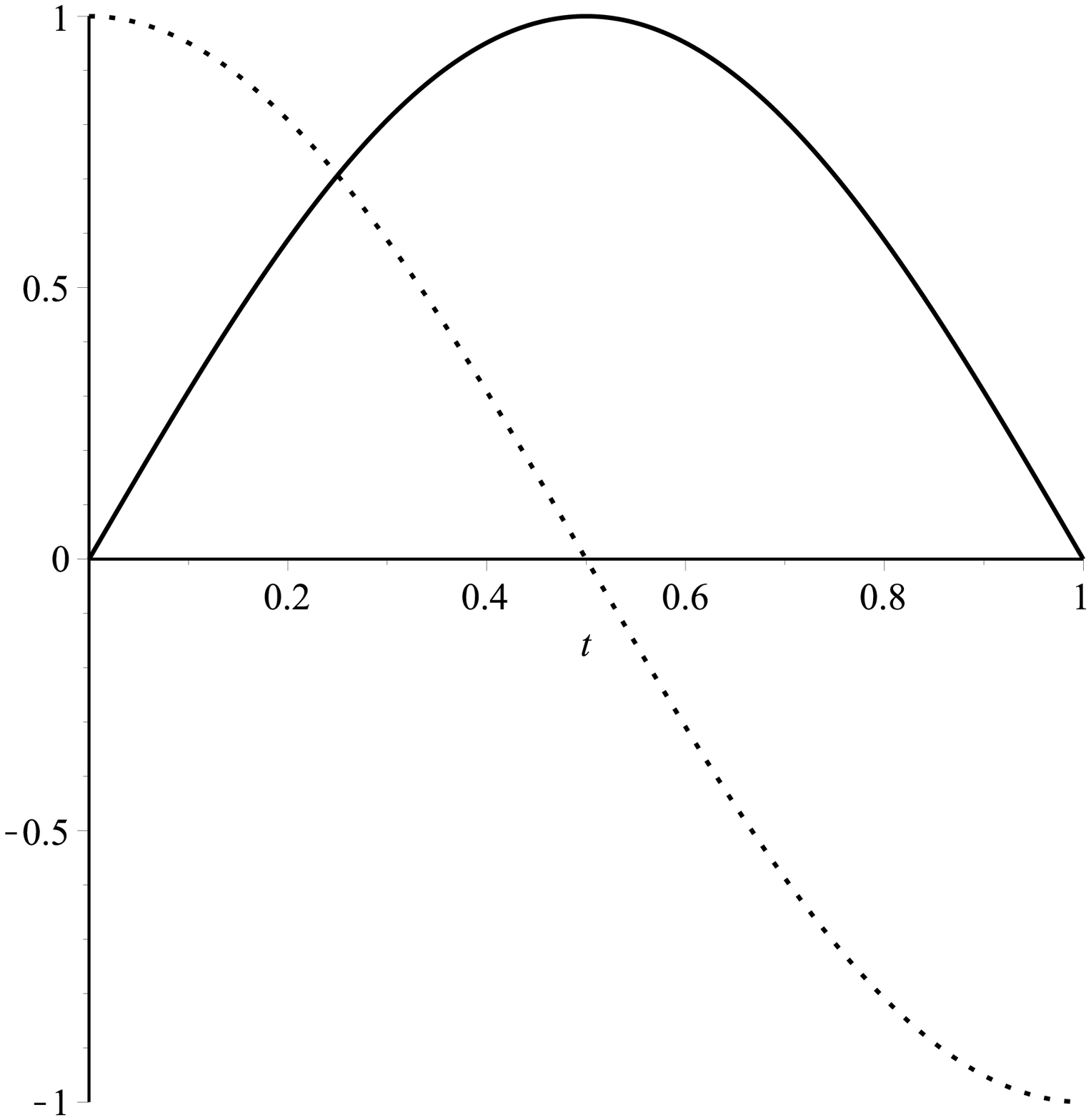}}
\centering
\subfigure[$M=3\pi/2$]{\includegraphics[height=2in, width=2.7in]{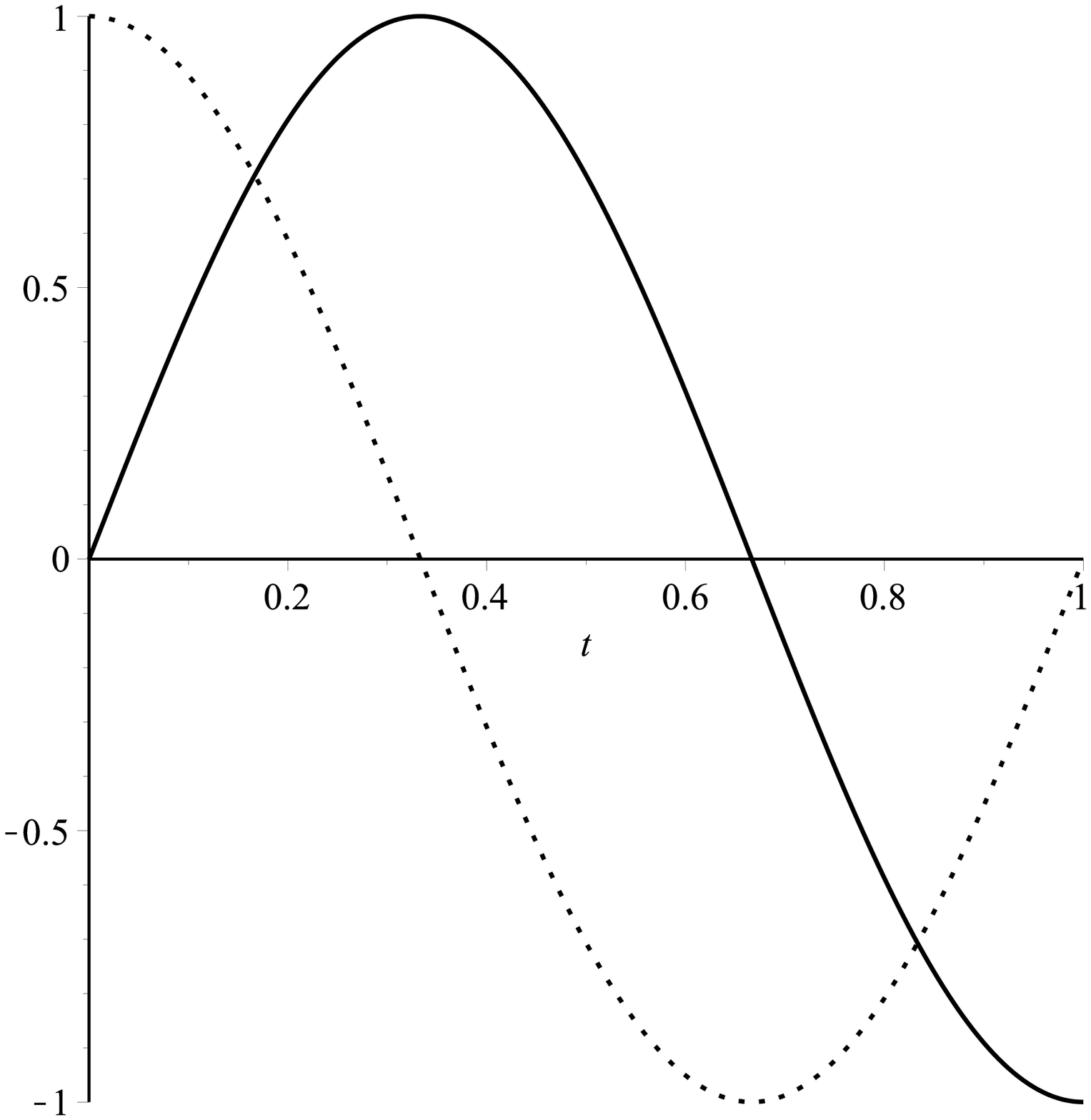}}
\qquad
\subfigure[$M=2\pi $]{\includegraphics[height=2in, width=2.7in]{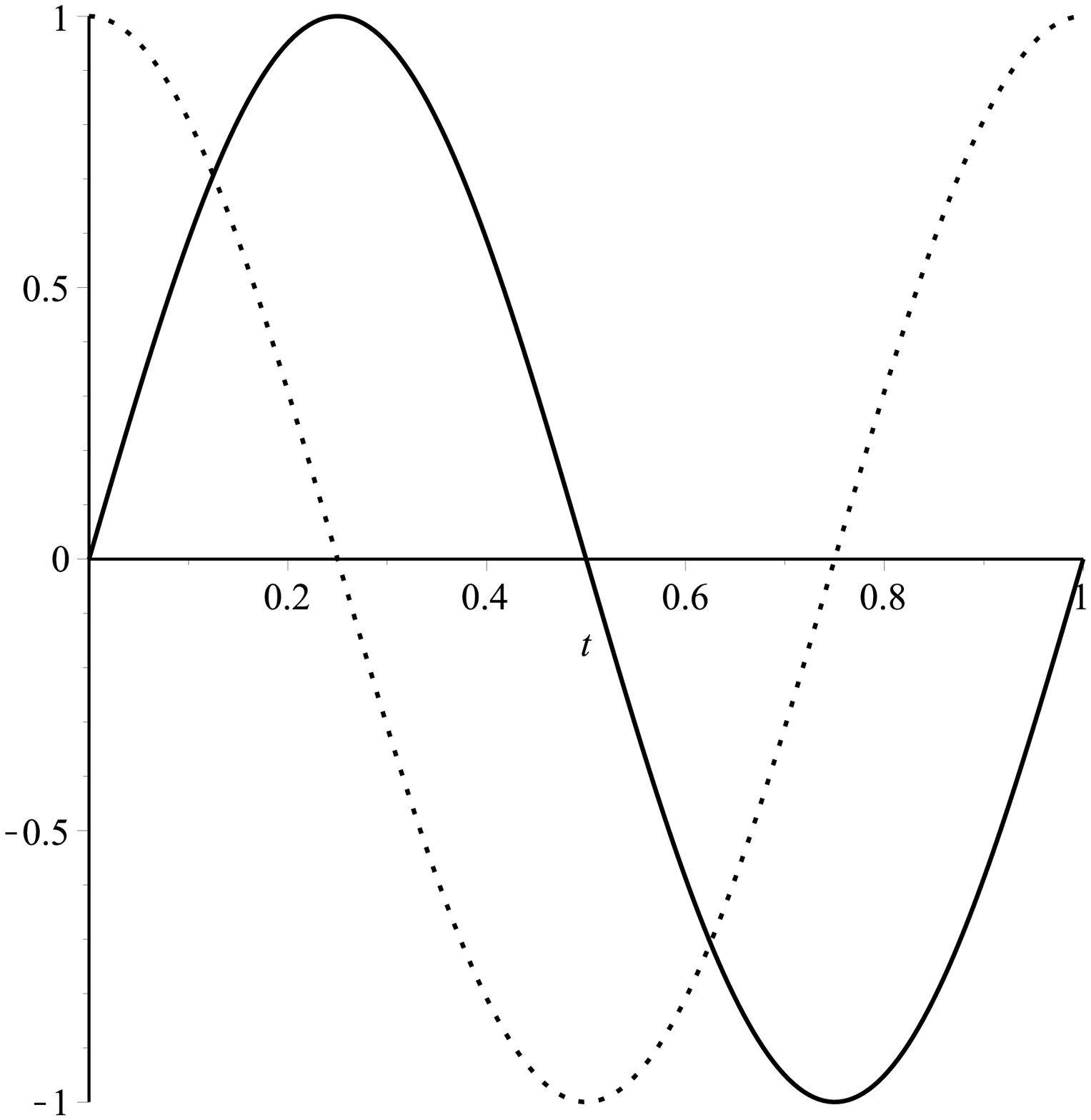}}
\caption{Functions $h_1(t)=\sin (tM)$ (solid) and $h_2(t)=\cos (tM)$ (dotted).}
\label{fig.viah}
\end{figure}
in order to facilitate our following discussion. We have used estimators (\ref{ihdiscrete}) and (\ref{jhdiscrete}) to calculate the indices, with the obtained values reported in Table \ref{tab.viah}.
\begin{table}[h!]
\centering
\begin{tabular}{cccccc}
\hline
$M$ & $\mathcal{I}_{h_1}$ & $\mathcal{I}_{h_2}$ \\
\hline
$\pi/2 $  & 0.0000 & 0.5274 \\
$\pi $    & 0.3183 & 1.2732 \\
$3\pi/2 $  & 1.1027 & 1.1027 \\
$2\pi $   & 1.2732 & 0.8270  \\
\hline
\end{tabular}
\qquad \qquad
\begin{tabular}{cccccc}
\hline
$M$ & $\mathcal{L}_{h_1}$ & $\mathcal{L}_{h_2}$ \\
\hline
$\pi/2 $  & 0.0000 &  0.1739 \\
$\pi $    & 0.0870 &  0.4053 \\
$3\pi/2 $  & 0.3409 &  0.3409 \\
$2\pi $   & 0.3618 &  0.2026 \\
\hline
\end{tabular}
\caption{Indices of non-decreasingness of $h_1(t)=\sin (tM)$ and $h_2(t)=\cos (tM)$.}
\label{tab.viah}
\end{table}
We see from the table that when $M=\pi/2$ and $\pi$, then irrespectively of which of the two indices we use, the function $h_1(t)=\sin (tM) $ is more non-decreasing (i.e., the index value is smaller) than $h_2(t)=\cos (tM) $. The two functions are equally non-decreasing when $M=3\pi/2$. When $M=2\pi $, then the function $h_1(t)=\sin (tM) $ is less non-decreasing (i.e., the index value is larger) than $h_1(t)=\cos (tM) $, and this is so for both indices. We shall now make sense of the numerical values by analyzing the four panels of Figure \ref{fig.viah}.

Panel (a) is clear: the increasing function $h_1(t)=\sin (tM) $ has its index zero, and the decreasing function $h_2(t)=\cos (tM) $ has a positive index.

In panel (b), the function $h_1(t)=\sin (tM) $ is increasing in the first half of the interval $[0,1]$ and the function $h_2(t)=\cos (tM) $ is always decreasing. Not surprisingly, therefore, any of the two indices of the function $h_1(t)=\sin (tM) $ is smaller than the corresponding index of $h_2(t)=\cos (tM) $.

In panel (c), the two functions have the same $\mathcal{I}$-indices, as well as the same $\mathcal{L}$-indices, and the reason for this is based on the general property that if $g(t)=-h(1-t)$ for all $t\in [0,1]$, then $I_g(t)=-I_h(1-t)$ for all $t\in [0,1]$. Hence, the equations $\mathcal{I}_{g}=\mathcal{I}_{h}$ and $\mathcal{L}_{g}=\mathcal{L}_{h}$ hold. In words, if we flip $h$ upside-down and also from left to right, then the value of any of the two indices will not change. This is why the two functions in panel (c) have the same $\mathcal{I}$-indices as well as the same $\mathcal{L}$-indices.

The results corresponding to panel (d) are more challenging to explain. To proceed, we adopt the following route: We subdivide the interval $(0,1]$ into four equal subintervals as follows:
\begin{equation}\label{subdivide01}
[0,1)
=\bigcup_{k=1}^{2M/\pi }\bigg [{k-1\over 2M/\pi },{k\over 2M/\pi }\bigg ) ;
\end{equation}
recall that $M=2\pi $ in this case.
By reshuffling these four subintervals, we can reconstruct the function $h_2(t)=\cos (tM) $ out of the corresponding pieces of the function $h_1(t)=\sin (tM) $, and we can of course do so the other way around. This one-to-one mapping between the two functions may wrongly suggest that the indices of the two functions should be the same, but they are obviously not, as we see from Table \ref{tab.viah}. With some tinkering we realize, however, that this is so because the original order of the aforementioned pieces of the function $h_1(t)=\sin (tM) $ is such that this function is more `wiggly' (i.e., follows the pattern  `increase-decrease-increase') than the function $h_2(t)=\cos (tM) $ (i.e., follows the pattern `decrease-increase'). Naturally now, since more wiggly functions tend to be less monotonic, the function $h_1(t)$ has a larger index than the function $h_2(t)$. Table \ref{wiggly} summarizes this point of view for all of the four panels of Figure \ref{fig.viah}.
\begin{table}[h!]
  \centering
\begin{tabular}{c|c|c|c|c}
\hline  & Panel (a) & Panel (b) & Panel (c) & Panel (d)  \\
\hline\hline
$h_1(t)$ & $+$   & $+-$ & $+--  \,~[= +-]$ & $+--+ \,~[= +-+]$\\
$\mathcal{I}_{h_1}$ & 0 & 0.3183 & 1.1027 & 1.2732 \\
\hline
$h_2(t)$ & $-$  & $-+$ & $--+  \,~[= -+]$ & $--++  \,~[= -+]$ \\
$\mathcal{I}_{h_2}$ & 0.5274 & 1.2732 & 1.1027 & 0.8270 \\
\hline
\end{tabular}
  \caption{Increasing ($-$) and decreasing ($-$) patterns of the functions $h_1(t)=\sin (tM) $ and $h_2(t)=\cos (tM) $ on the subintervals defined by equation (\ref{subdivide01}).}
  \label{wiggly}
\end{table}

\section{Indices of functions on arbitrary finite intervals}
\label{section-4}

Suppose now that we want to measure the lack of non-decreasingness of a function defined on $[a,A]\subset \mathbf{R}$. Since shifting to the left or to the right does not change the shape of the function, and thus its degree of non-decreasingness, we thus redefine the function onto the interval $[0,M]$ by simply replacing its argument $t$ by $t-a$, where $M=A-a$. Therefore, without loss of generality, from now on we work with any integrable function $f$ defined on the interval $[0,M]$, for some $M>0$. We note at the outset that we cannot reduce our task to the interval $[0,1]$ by simply replacing its argument $t$ by $tM$ because such an operation would inevitably distort the degree of non-decreasingness.

Hence, given a function $f: [0,M] \to \mathbf{R}$, we proceed by first defining its non-decreasing rearrangement by the formula
\[
I_{f,M}(t)=\inf \{ x \in \mathbf{R}: G_{f,M}(x) \geq t \}
 \quad \textrm{for all} \quad t\in [0,M],
\]
where
\[
G_{f,M}(x)= \lambda \{ t \in [0,M]: f(t) \leq x \} \quad \textrm{for all} \quad x\in \mathbf{R}.
\]
Our first index of non-decreasingness of the function $f: [0,M] \to \mathbf{R}$ is then defined by
\begin{equation}
\mathcal{I}_{f,M} = \int_0^M \left|f(t)-I_{f,M}(t)\right|dt.
\label{if}
\end{equation}
Furthermore, with
$H_{f,M}(t)= \int_0^t f(s)ds$ and $ C_{H_{f,M}}(t)=\int_0^t I_{f,M}(s)ds $ for all $ t\in [0,M]$, we define the second index of non-decreasingness of $f$ by the formula
\begin{equation}
\mathcal{L}_{f,M} = \int_0^M \left(H_{f,M}(t)-C_{H_{f,M}}(t)\right)dt.
\label{jf}
\end{equation}

We shall next illustrate the two indices using the functions $\sin(t)$ and $\cos(t)$ defined on the four domains $[0,\pi/2]$, $[0,\pi]$, $[0,3\pi/2]$, and $[0,2\pi]$. The values of the two indices are given in Table \ref{tab.direct}.
\begin{table}[h!]
\centering
\begin{tabular}{ccccc}
\hline
$M$ & $\mathcal{I}_{\sin ,M}$ & $\mathcal{I}_{\cos ,M}$ \\
\hline
$\pi/2$   & 0.0000 & 0.8284  \\
$\pi $    & 1.0000 & 4.0000  \\
$3\pi/2 $ & 5.1962 & 5.1962  \\
$2\pi $   & 8.0000 & 5.1962  \\
\hline
\end{tabular}
\qquad \qquad
\begin{tabular}{ccccc}
\hline
$M$ & $\mathcal{L}_{\sin ,M}$ & $\mathcal{L}_{\cos ,M}$ \\
\hline
$\pi/2$   & 0.0000 & 0.4292 \\
$\pi $    & 0.8584 & 4.0000 \\
$3\pi/2 $ & 7.5708 & 7.5708 \\
$2\pi $   & 14.2832 & 8.0000 \\
\hline
\end{tabular}
\caption{Indices of non-decreasingness of $\sin(t)$ and $\cos(t)$ on the domain $[0,M]$.}
\label{tab.direct}
\end{table}
Since this example mimics that of Section \ref{section-computing}, various interpretations there apply here as well. In short, we see from the table that irrespectively of which of the two non-decreasing indices we use, the index of non-decreasingness of $\sin (t) $ is smaller than that of $\cos (t) $ on the domains $[0, \pi/2]$ and $[0,\pi]$. The two functions have the same non-decreasingness indices on $[0,3\pi/2]$. Finally, on the domain $[0, 2\pi]$, the index of non-decreasingness of the function $\sin (t) $ is greater than that of $\cos (t) $, irrespectively of which of the two indices we use, which implies that $\sin (t) $ is less non-decreasing than $\cos (t) $ on $[0, 2\pi]$.

We have used a discretization technique to calculate the values reported in Table \ref{tab.direct}. The technique is a modification of that of Section \ref{section-computing}. To explain the modification, in Theorem \ref{th-31} below we establish a connection between  the pair of the earlier introduced indices on the interval $[0,1]$ and the pair of the current ones on the interval $[0,M]$.

\begin{theorem}\label{th-31}
Let $f: [0,M] \to \mathbf{R}$ for some $M>0$, and let $h: [0,1] \to \mathbf{R}$ be the function defined by $h(t)=f(tM)$ for all $t \in [0,1]$. Then
\begin{equation}
\mathcal{I}_{f,M} =M \mathcal{I}_{h}
\quad \textrm{and} \quad
\mathcal{L}_{f,M} =M^2 \mathcal{L}_{h}.
\label{eq.jhjf}
\end{equation}
\end{theorem}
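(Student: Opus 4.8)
The plan is to derive both identities in~(\ref{eq.jhjf}) from the relationship between the non-decreasing rearrangements of $f$ and $h$, together with simple changes of variables in the defining integrals. The key observation is that the substitution $t \mapsto tM$ is measure-preserving up to the factor $M$: for any $x\in\mathbf{R}$,
\[
G_{f,M}(x)=\lambda\{t\in[0,M]: f(t)\le x\}
=M\,\lambda\{s\in[0,1]: f(sM)\le x\}=M\,G_{h}(x),
\]
since $h(s)=f(sM)$. Taking generalized inverses, I would then verify that
\[
I_{f,M}(t)=I_{h}(t/M)\qquad\textrm{for all }t\in[0,M],
\]
which follows because $G_{f,M}(x)\ge t$ is equivalent to $G_h(x)\ge t/M$, so the infimum defining $I_{f,M}(t)$ equals the infimum defining $I_h(t/M)$. (This is the natural analogue, on the interval $[0,M]$, of the probabilistic interpretation in Note~\ref{note22}: the quantile function of $f$ viewed on $[0,M]$ is just a time-rescaled copy of the quantile function of $h$ on $[0,1]$.)

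With this identity in hand, the first equation in~(\ref{eq.jhjf}) is immediate. Starting from definition~(\ref{if}) and substituting $t=sM$, $dt=M\,ds$, I get
\[
\mathcal{I}_{f,M}=\int_0^M\big|f(t)-I_{f,M}(t)\big|\,dt
=\int_0^1\big|f(sM)-I_{h}(s)\big|\,M\,ds
=M\int_0^1\big|h(s)-I_{h}(s)\big|\,ds=M\,\mathcal{I}_{h}.
\]
For the second equation I would first record how the integrated functions transform. From $H_{f,M}(t)=\int_0^t f(u)\,du$, the substitution $u=vM$ gives $H_{f,M}(t)=M\,H_{h}(t/M)$, and likewise, using $I_{f,M}(u)=I_h(u/M)$, one gets $C_{H_{f,M}}(t)=\int_0^t I_{f,M}(u)\,du=M\,C_{H_{h}}(t/M)$. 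Substituting these into~(\ref{jf}) and changing variables $t=sM$ once more:
\[
\mathcal{L}_{f,M}=\int_0^M\big(H_{f,M}(t)-C_{H_{f,M}}(t)\big)\,dt
=\int_0^1 M\big(H_{h}(s)-C_{H_{h}}(s)\big)\,M\,ds
=M^2\int_0^1\big(H_{h}(s)-C_{H_{h}}(s)\big)\,ds=M^2\,\mathcal{L}_{h},
\]
where the last integral equals $\mathcal{L}_{h}$ by the alternative representation~(\ref{j-equation}) established in the proof of Theorem~\ref{theorem.j}. (Alternatively, one could bypass $H$ and $C$ altogether and start from the weighted form~(\ref{equiv-2}): $\mathcal{L}_{f,M}$, written with the weight $M-t$ on $[0,M]$, becomes $M^2$ times $\int_0^1(h(s)-I_h(s))(1-s)\,ds$ after the substitution $t=sM$ — but one then has to first confirm that the index on $[0,M]$ defined via $H_{f,M}-C_{H_{f,M}}$ does indeed coincide with this weighted integral, which is exactly the Fubini computation of Part~1 of Theorem~\ref{theorem.j} carried out on $[0,M]$.)

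The only step requiring genuine care is the rearrangement identity $I_{f,M}(t)=I_h(t/M)$, and in particular the claim $G_{f,M}=M\,G_h$; everything after that is bookkeeping with changes of variables. The subtlety there is purely measure-theoretic — one must check that the image of a Borel set under the scaling $s\mapsto sM$ scales its Lebesgue measure by $M$ and that $\{t\in[0,M]:f(t)\le x\}$ is precisely the $M$-dilate of $\{s\in[0,1]:f(sM)\le x\}$ — but this is standard and causes no real difficulty. I do not anticipate any obstacle; the result is essentially a scaling lemma, and the proof is short.
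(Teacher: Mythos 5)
Your proposal is correct and follows essentially the same route as the paper: both rest on the scaling identities $G_{f,M}=M\,G_h$ (hence $I_{f,M}(t)=I_h(t/M)$, equivalently $I_h(t)=I_{f,M}(tM)$) and $H_{f,M}(t)=M\,H_h(t/M)$, $C_{H_{f,M}}(t)=M\,C_{H_h}(t/M)$, followed by the change of variables $t=sM$ and the representation~(\ref{j-equation}) to identify $\int_0^1\left(H_h-C_{H_h}\right)$ with $\mathcal{L}_h$. The only difference is cosmetic — you transport the integral from $[0,M]$ down to $[0,1]$ whereas the paper goes in the opposite direction — so no further comment is needed.
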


\begin{proof}
Since $G_{h}(x)= G_{f,M}(x)/M$, we have $I_{h}(t)= I_{f,M}(tM)$ for all $t\in [0,1]$. Hence,
\[
\mathcal{I}_{h}
= \int_0^1 \left|f(tM)-I_{f,M}(tM)\right|dt
=\frac{1}{M} \mathcal{I}_{f,M},
\]
which establishes the first equation of (\ref{eq.jhjf}). To prove the second equation, we first check that $H_{h}(t)= H_{f,M}(tM)/M$
and $C_{H_h}(t)= C_{H_{f,M}}(tM)/M$. Consequently,
\[
\mathcal{L}_{h} = \frac{1}{M}\int_0^1 \left(H_{f,M}(tM)-C_{H_{f,M}}(tM)\right)dt
= \frac{1}{M^2}\mathcal{L}_{f,M}.
\]
This establishes the second equation of (\ref{eq.jhjf}), and concludes the proof of Theorem \ref{th-31}.
\end{proof}

We are now in the position to introduce estimators $\widehat{\mathcal{I}}_{f,M}$ and $\widehat{\mathcal{L}}_{f,M}$ of the indices $\mathcal{I}_{f,M}$ and $\mathcal{L}_{f,M}$, respectively. Namely, with $h(t)=f(tM)$ and using formulas (\ref{ihdiscrete}) and (\ref{jhdiscrete}), we have
\begin{equation}
\widehat{\mathcal{I}}_{f,M} = \frac{M}{n} \sum_{i=1}^n \left|\tau_{i:n}-\tau_i\right|.
\label{ihdiscrete-1}
\end{equation}
and
\begin{equation}
\widehat{\mathcal{L}}_{f,M} = \bigg (\frac{M}{n}\bigg)^2 \sum_{i=1}^{n} i \left(\tau_{i:n}-\tau_i\right),
\label{jhdiscrete-1}
\end{equation}
where $\tau_{1:n}\le \cdots \le \tau_{n:n}$ denote the ordered values $\tau_i=f(t_i M)$, $i=1,\dots , n$. We used formulas (\ref{ihdiscrete-1}) and (\ref{jhdiscrete-1}) to obtain the numerical values of the two indices reported in Table \ref{tab.direct}, where we set $n=100,000$ in order to have a mesh sufficiently fine to achieve the desired accuracy level of four decimal digits.

\section{Conclusions}
\label{section-8}

Inspired by examples from a number of research areas, in this paper we have explored two indices designed for measuring the lack of monotonicity in functions. The indices take on the value $0$ for every non-decreasing function, and on positive values for other functions: the larger the values, the less non-decreasing the function is deemed to be. One of the two indices is simpler, but it is only subadditive for comonotonic functions, whereas the other index is more complex, but it is additive for comonotonic functions. Since the two indices are too involved to easily yield values even for elementary functions, we have devised a numerical procedure for calculating the two indices in virtually no time and at any specified accuracy.

\section*{Acknowledgments}

The first author gratefully acknowledges his PhD study support by the Directorate General of Higher Education, Ministry of National Education, Indonesia. The second author has been supported by the Natural Sciences and Engineering Research Council (NSERC) of Canada.

\appendix
\section{Technicalities}
\label{appendix}

\begin{proposition}\label{equiv-0}
Function $h:[0,1]\to \mathbf{R}$ is non-decreasing if and only if the equation $I_{h}(t)=h(t)$ holds for $\lambda $-almost all $t \in [0,1]$. If $h$ is left-continuous, then it is non-decreasing if and only  $I_{h}(t)=h(t)$ for all $t \in [0,1]$.
\end{proposition}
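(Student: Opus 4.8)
The plan is to establish the two implications separately, with the reverse one (from $I_h=h$ to non-decreasingness of $h$) being short and the forward one (from non-decreasingness of $h$ to $I_h=h$) carrying the real content. It is convenient to keep the probabilistic dictionary of Note~\ref{note22} in mind: with $h$ viewed as a random variable $X$ on $([0,1],\lambda)$ we have $G_h=F_X$ and $I_h=F_X^{-1}$, the left-continuous quantile function, so the assertion becomes the classical fact that $X$ coincides with $F_X^{-1}$ as a function on the sample space exactly when $X$ is a non-decreasing function of the identity coordinate.

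For the reverse implication, note that $I_h$ is by construction non-decreasing, and moreover left-continuous, being the generalized inverse $\inf\{x:G_h(x)\ge t\}$ of the right-continuous function $G_h$. Hence if $I_h(t)=h(t)$ for every $t$, then $h$ is non-decreasing (and left-continuous). If instead $I_h=h$ only $\lambda$-almost everywhere, then $h$ agrees a.e.\ with the non-decreasing function $I_h$; when $h$ is additionally left-continuous, two left-continuous functions that coincide a.e.\ coincide everywhere (the agreement set is dense and one passes to left limits), so again $h=I_h$ on $[0,1]$ and $h$ is non-decreasing. This yields the second sentence of the proposition once the forward implication is in hand.

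For the forward implication, suppose $h$ is non-decreasing. Then $\{s\in[0,1]:h(s)\le x\}$ is an initial subinterval of $[0,1]$ for each $x$, so $G_h(x)$ equals its right endpoint up to a point of measure zero. Fix a continuity point $t\in(0,1)$ of $h$; since a monotone function has at most countably many discontinuities, $\lambda$-almost every $t$ qualifies. The bound $I_h(t)\le h(t)$ is immediate from $\{s\le t\}\subseteq\{s:h(s)\le h(t)\}$, which gives $G_h(h(t))\ge t$. For the opposite bound, take any $x<h(t)$: continuity of $h$ at $t$ furnishes $\delta>0$ with $h(s)>x$ on $(t-\delta,t)$, while monotonicity gives $h(s)\ge h(t)>x$ for $s\ge t$, so $\{s:h(s)\le x\}\subseteq[0,t-\delta]$ and $G_h(x)\le t-\delta<t$; right-continuity of $G_h$ then forces $x<I_h(t)$, and letting $x\uparrow h(t)$ yields $I_h(t)\ge h(t)$. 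Thus $I_h(t)=h(t)$ for $\lambda$-almost every $t$, and the left-continuous refinement follows from the density argument of the previous paragraph applied to the left-continuous functions $h$ and $I_h$.

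The step I expect to be the crux is the strict inequality $G_h(x)<t$ for $x<h(t)$: monotonicity alone gives only the non-strict bound $G_h(x)\le t$, and it is precisely continuity of $h$ at $t$ that improves it, which is exactly why the a.e.\ qualifier (or left-continuity) cannot be dropped. A minor secondary point is the handling of the endpoints and of a possibly unbounded $h$; the endpoints form a $\lambda$-null set and so do not affect the a.e.\ statement, and unboundedness is harmless since the argument only ever uses $h$ at an interior continuity point, where it is finite.
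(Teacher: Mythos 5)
Your proof of the substantive direction (non-decreasing $\Rightarrow$ $I_h=h$ $\lambda$-a.e.) is correct and follows essentially the same route as the paper: both arguments reduce the claim to continuity points of $h$ and then invoke the fact that a monotone function has at most countably many discontinuities. The paper gets there by computing $G_h(x)=\sup\{t:h(t)\le x\}$ and reading off $I_h(t)=\lim_{s\uparrow t}h(s)$, whereas you sandwich $I_h(t)$ between two inequalities ($I_h(t)\le h(t)$ from monotonicity alone, $I_h(t)\ge h(t)$ from continuity of $h$ at $t$ via the strict bound $G_h(x)<t$ for $x<h(t)$); the two are interchangeable, and your diagnosis that continuity is exactly what upgrades $G_h(x)\le t$ to $G_h(x)<t$ is the right one.

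Where you diverge is the reverse implication, and there you are more careful than the paper. The paper dispatches it in one sentence (``since $I_h$ is non-decreasing, $h$ must be non-decreasing as well''), which, taken literally for a.e.\ equality, is not a proof: altering $h$ on a null set (e.g.\ $h(t)=t$ for $t\ne 1/2$ and $h(1/2)=2$) leaves $I_h$ unchanged, so $I_h=h$ a.e.\ does not force $h$ itself to be non-decreasing. You implicitly acknowledge this by proving the converse only in the ``everywhere'' case and in the left-continuous a.e.\ case (where the dense-set/left-limit argument is valid on $(0,1]$). That does leave the converse half of the proposition's first sentence, as literally stated, unproven --- but it is unprovable as stated, and the paper's own one-line argument for it is no better; the proposition is really being used in the paper only through the implications you did establish. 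One cosmetic remark: left-continuity of $I_h$ holds for the generalized inverse of any non-decreasing $G_h$ and does not need right-continuity of $G_h$ (that is needed only for the Galois inequality $I_h(t)\le x\iff t\le G_h(x)$, which you also use, correctly, to get $x<I_h(t)$); and at $t=0$ the left-limit/left-continuity arguments are vacuous, an endpoint degeneracy the paper ignores as well.
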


\begin{proof}
Assume first that $I_{h}(t)=h(t)$ for $\lambda $-almost all $t \in [0,1]$. Since the function $I_{h}$ is non-decreasing, then the function $h$ must be non-decreasing as well.

Conversely, suppose that the function $h$ is non-decreasing. Then from the definition of $G_{h}(x)$, we have the equation $G_{h}(x) = \sup \{t \in [0,1]: h(t) \leq x\}$  and thus, in turn, from  the definition of $I_{h}(t)$, we have the equation $I_{h}(t) = \lim_{s \uparrow t} h(s)$. Consequently, $I_{h}$ is left-continuous  and the equation $I_{h}(t)=h(t)$ holds at every continuity point $t\in [0,1]$ of the function $h$. Since the set of all discontinuity points of every non-decreasing function can only be at most of $\lambda $-measure zero, the converse of Proposition \ref{equiv-0} follows. This finishes the entire proof of Proposition \ref{equiv-0}.
\end{proof}

\begin{proof}[Technicalities of Example \ref{ex-1}]
We only need to consider the case $\alpha \in [0,1/2) $. Since
\[
G_{h_{\alpha }}(x) =
\left
\{
\begin{array}{ll}
x & \text{when}\quad x \in [0, \alpha ],
\\
\displaystyle
\frac{2-2\alpha}{1-2\alpha}~x+\frac{\alpha}{2\alpha-1} & \text{when}\quad x \in \left(\alpha, 1/2\right],
\\
1 & \text{when}\quad x > 1/2,
\end{array}
\right.
\]
the non-decreasing rearrangement of $h_{\alpha }$ can be expressed as follows:
\[
I_{h_{\alpha }}(t) =
\left \{
\begin{array}{ll}
t & \text{when}\quad t \in [0, \alpha),
\\
\displaystyle
\frac{1-2\alpha}{2-2\alpha} ~t
+\frac{\alpha}{2-2\alpha}& \text{when}\quad t \in [\alpha, 1].
\end{array}
\right.
\]
Utilizing the easily checked fact that the functions  $h_{\alpha}$ and $I_{h_{\alpha}}$ cross at the only point $t_c= (\alpha-2)/(2\alpha-3)$, we calculate the index $\mathcal{I}_{h_{\alpha}}$ as follows:
\begin{align*}
\mathcal{I}_{h_{\alpha }}
&= \int_{\alpha}^{1/2} |h_{\alpha}(t)-I_{h}(t)|dt+ \int_{1/2}^{{t_c}} |h_{\alpha}(t)-I_{h}(t)|dt + \int_{t_c}^1 |h_{\alpha}(t)-I_{h}(t)|dt \notag \\
&= \int_{\alpha}^{1/2} \left(t - \frac{(1-2\alpha)t+\alpha}{2-2\alpha}\right)dt
+\int_{1/2}^{t_c} \left((2\alpha-1)t+(1-\alpha)-\frac{(1-2\alpha)t+\alpha}{2-2\alpha}\right)dt \notag \\
& \text{\space\space\space}
 +\int_{t_c}^1 \left(\frac{(1-2\alpha)t+\alpha}{2-2\alpha}-(2\alpha-1)t-(1-\alpha)\right)dt \notag \\
&= \left(\frac{1}{8}\frac{(2\alpha-1)^2}{2-2\alpha}\right)
+\left(\frac{1}{2}\frac{(1-2\alpha)(\alpha-2)^2}{(2-2\alpha)(3-2\alpha)}-\frac{1}{8}\frac{(1-2\alpha)(5-2\alpha)}{2-2\alpha}\right) \notag \\
&\text{\space\space\space} +\left(\frac{\alpha-1/2}{2-2\alpha}+\frac{1}{2}\frac{(1-2\alpha)(\alpha-2)^2}{(2-2\alpha)(3-2\alpha)}\right) \notag \\
&=\frac{(1-2\alpha)(1-\alpha)}{2(3-2\alpha)} .
\end{align*}
Similar arguments produce a formula for the index $\mathcal{L}_{h_{\alpha}}$:
\begin{align*}
\mathcal{L}_{h_{\alpha }}
&= \int_{\alpha}^{1/2} (h_{\alpha}(t)-I_{h}(t))(1-t)dt+ \int_{1/2}^{1} (h_{\alpha}(t)-I_{h}(t))(1-t)dt
\\
&=\frac{(1-2\alpha)(1-\alpha)}{24}.
\end{align*}
This concludes the technicalities of Example \ref{ex-1}.
\end{proof}

\begin{proof}[Proof of bound (\ref{eq.2.1})]
Using the probabilistic interpretation, we write the equation
\begin{equation}
\int_0^1 \left|I_{h}(t)-I_g(t)\right| dt = \int_0^1 \left|F_X^{-1}(t)-F_Y^{-1}(t)\right| dt.
\label{eq.2.1a}
\end{equation}
The integral on the right-hand side of equation (\ref{eq.2.1a}) is known as the Dobrushin distance between the two cdf's $F_X$ and $F_Y$.  The integral is equal (Dobrushin, 1970) to $\inf \mathbf{E}[|\xi-\eta |\,]$, where the infinum is taken over all random variables $\xi $ and $\eta $ that have finite first moments and whose cdf's are equal to $F_X$ and $F_Y$, respectively. The infinum is not larger than $\mathbf{E}[|h(U)-g(U) |\,]$, where $U$ is a uniform random variable on $\Omega =[0,1]$, because the cdf's of the random variables $h(U)$ and $g(U)$ are equal to $F_X$ and $F_Y$, respectively. Indeed, in the case of $h(U)$ for example, the cdf $F_{h(U)}$ of $h(U)$ is equal to $\mathbf{P}\{ \omega \in \Omega : h(U(\omega ))\le x\}$, which is equal to $ \lambda \{ t \in [0,1] : h(t)\le x\}$ because $U(\omega )=\omega $ by the definition of the uniform random variable on $\Omega =[0,1]$. Note that $ \lambda \{ t \in [0,1] : h(t)\le x\}$ is equal to $G_{h}(x)$, which is in turn equal to $F_X(x)$ according to our probabilistic interpretation. Hence, $F_{h(U)}=F_X$ and, likewise, $F_{g(U)}=F_Y$. This concludes the proof of bound (\ref{eq.2.1}).
\end{proof}

\end{document}